\definecolor{refcolor}{RGB}{0,0,190}
\begin{document}

\newtheorem{theorem}{Theorem}
\newtheorem{question}{Question}

\theoremstyle{remark}
\newtheorem{answer}{Answer}

\newtheorem{proofStep}{Step}
\numberwithin{proofStep}{theorem} % important bit

\theoremstyle{definition}

\newtheorem{observation}{Observation}
\newtheorem{option}{Option}
\newtheorem{problem}{Problem}
\newtheorem{consequence}{Consequence}
\newtheorem{experiment}{Experiment}
\newtheorem{implication}{Implication}
\newtheorem{thoughtExp}{Thought Experiment}
\newtheorem{claim}{Claim}

\newtheorem{proposition}{Proposition}
\newtheorem{principle}{Principle}
\newtheorem{lemma}{Lemma}
\newtheorem{corollary}{Corollary}
\newtheorem{definition}{Definition}
\newtheorem{hypothesis}{Hypothesis}
\newtheorem{assumption}{Assumption}
\newtheorem{property}{Property}
\newtheorem{criterion}{Criterion}
\newtheorem{objection}{Objection}
\newtheorem{remark}{Remark}
\newtheorem{example}{Example}
\theoremstyle{remark}
\newtheorem{reply}{Reply}

\newenvironment{solution}[2] {\paragraph{Solution to {#1} {#2} :}}{\hfill$\square$}
%end todo do I need these?
%\preprint{xxxxxx}

%--------------------------------------------------------
% Defines

\newcommand{\orcid}[1]{\href{https://orcid.org/#1}{\textcolor[HTML]{A6CE39}{\aiOrcid}}}

\def\bibsection{\section*{\refname}} 

\newcommand{\pbref}[1]{\ref{#1} (\nameref*{#1})}
   
\def\({\big(}
\def\){\big)}

\newcommand{\tn}{\textnormal}
\newcommand{\ds}{\displaystyle}
\newcommand{\dsfrac}[2]{\displaystyle{\frac{#1}{#2}}}

\newcommand{\boplus}{\textstyle{\bigoplus}}
\newcommand{\botimes}{\textstyle{\bigotimes}}
\newcommand{\bcup}{\textstyle{\bigcup}}
\newcommand{\bcap}{\textstyle{\bigcap}}
\newcommand{\bsqcup}{\textstyle{\bigsqcup}}
\newcommand{\bsqcap}{\textstyle{\bigsqcap}}

\newcommand{\MQS}{\ref{def:MQS}}
\newcommand{\TPS}{\ref{as:TPS}}
\newcommand{\ASSpace}{\ref{as:space}}
\newcommand{\ASBasis}{\ref{as:basis}}

\newcommand{\struct}{\mc{S}}
\newcommand{\kind}{\mc{K}}

\newcommand{\statespace}{\mathcal{S}}
\newcommand{\hilbert}{\mathcal{H}}
\newcommand{\vectorspace}{\mathcal{V}}
\newcommand{\mc}[1]{\mathcal{#1}}
\newcommand{\ms}[1]{\mathscr{#1}}

\newcommand{\wh}[1]{\widehat{#1}}
\newcommand{\dwh}[1]{\wh{\rule{0ex}{1.3ex}\smash{\wh{\hfill{#1}\,}}}}

\newcommand{\wt}[1]{\widetilde{#1}}
\newcommand{\wht}[1]{\widehat{\widetilde{#1}}}
\newcommand{\on}[1]{\operatorname{#1}}

\newcommand{\vect}[1]{\mathsf{#1}}
\newcommand{\oper}[1]{\wh{\mathbf{#1}}}
\newcommand{\TT}{\intercal}

\newcommand{\R}{\mathbb{R}}
\newcommand{\C}{\mathbb{C}}
\newcommand{\Z}{\mathbb{Z}}
\newcommand{\K}{\mathbb{K}}
\newcommand{\N}{\mathbb{N}}
\newcommand{\Prj}{\mathcal{P}}
\newcommand{\abs}[1]{\left|#1\right|}

\newcommand{\de}{\operatorname{d}}
\newcommand{\tr}{\operatorname{tr}}
\newcommand{\im}{\operatorname{Im}}

\newcommand{\dof}{d.o.f.\xspace}
\newcommand{\dofs}{d.o.f.s\xspace}

\newcommand{\ie}{\textit{i.e.}\ }
\newcommand{\vs}{\textit{vs.}\ }
\newcommand{\eg}{\textit{e.g.}\ }
\newcommand{\cf}{\textit{cf.}\ }
\newcommand{\etc}{\textit{etc}}
\newcommand{\etal}{\textit{et al.}}

\newcommand{\Span}{\tn{span}}
\newcommand{\pde}{PDE}
\newcommand{\U}{\tn{U}}
\newcommand{\SU}{\tn{SU}}
\newcommand{\GL}{\tn{GL}}

\newcommand{\schrod}{Schr\"odinger}
\newcommand{\vonneum}{Liouville-von Neumann}
\newcommand{\ks}{Kochen-Specker}
\newcommand{\leggarg}{Leggett-Garg}
\newcommand{\bra}[1]{\langle#1|}
\newcommand{\ket}[1]{|#1\rangle}
\newcommand{\braket}[2]{\langle#1|#2\rangle}
\newcommand{\ketbra}[2]{|#1\rangle\langle#2|}
\newcommand{\expectation}[1]{\langle#1\rangle}
\newcommand{\Herm}{\tn{Herm}}
\newcommand{\Eval}{\tn{Eval}}
\newcommand{\Sym}[1]{\tn{Sym}_{#1}}
\newcommand{\meanvalue}[2]{\langle{#1}\rangle_{#2}}

\newcommand{\btimes}{\boxtimes}
\newcommand{\btimess}{{\boxtimes_s}}

\newcommand{\h}{\mathbf{(2\pi\hbar)}}
\newcommand{\x}{\mathbf{x}}
\newcommand{\xThree}{\boldsymbol{x}}
\newcommand{\z}{\mathbf{z}}
\newcommand{\q}{\boldsymbol{q}}
\newcommand{\p}{\boldsymbol{p}}
\newcommand{\f}{\mathbf{f}}
\newcommand{\0}{\mathbf{0}}
\newcommand{\annih}{\widehat{\mathbf{a}}}

\newcommand{\cs}{\mathscr{C}}
\newcommand{\ps}{\mathscr{P}}
\newcommand{\xhat}{\widehat{\x}}
\newcommand{\phat}{\widehat{\mathbf{p}}}
\newcommand{\fqproj}[1]{\Pi_{#1}}
\newcommand{\cqproj}[1]{\wh{\Pi}_{#1}}
\newcommand{\cproj}[1]{\wh{\Pi}^{\perp}_{#1}}

\newcommand{\M}{\mathbb{E}_3}
\newcommand{\D}{\mathbf{D}}
\newcommand{\dn}{\tn{d}}
\newcommand{\db}{\mathbf{d}}
\newcommand{\n}{\mathbf{n}}
\newcommand{\m}{\mathbf{m}}
\newcommand{\V}[1]{\mathbb{V}_{#1}}
\newcommand{\F}[1]{\mathcal{F}_{#1}}
\newcommand{\Fvacuumfield}{\widetilde{\mathcal{F}}^0}
\newcommand{\nD}[1]{|{#1}|}
\newcommand{\Lin}{\mathcal{L}}
\newcommand{\End}{\tn{End}}
\newcommand{\vbundle}[4]{{#1}\to {#2} \stackrel{\pi_{#3}}{\to} {#4}}
\newcommand{\vbundlex}[1]{\vbundle{V_{#1}}{E_{#1}}{#1}{M_{#1}}}
\newcommand{\rep}{\rho_{\scriptscriptstyle\btimes}}

\newcommand{\intl}[1]{\int\limits_{#1}}

\newcommand{\moyalBracket}[1]{\{\mskip-5mu\{#1\}\mskip-5mu\}}

\newcommand{\Hint}{H_{\tn{int}}}

\newcommand{\quot}[1]{``#1''}

\def\sref #1{\S\ref{#1}}

\newcommand{\dBB}{de Broglie--Bohm}
\newcommand{\dBBt}{{\dBB} theory}
\newcommand{\pwt}{pilot-wave theory}
\newcommand{\PWT}{PWT}
\newcommand{\NRQM}{{\textbf{NRQM}}}

\newcommand{\image}[3]{
\begin{figure}[!ht]
\centering
\includegraphics[width=#2\textwidth]{#1}
\caption{\small{\label{#1}#3}}
\end{figure}
}

\newcommand{\no}[1]{\overline{#1}}

\newcommand{\cmark}{\ding{51}}%
\newcommand{\xmark}{\ding{55}}%

\hyphenation{pa-ram-e-trized ob-serv-er }

%\newcommand{\todo}[1]{\textcolor{red}{\raisebox{-1pt}{\scalebox{1.5}{$\bigtriangleup$}}$\mkern-15.3mu${\textbf{!}}} \textcolor{blue}{\ #1}\PackageWarning{TODO:}{#1!}{}}

%------------------------------------------------------------%

\title{Are observers reducible to structures?}

\author{Ovidiu Cristinel Stoica\ \orcidlink{0000-0002-2765-1562}}
% \homepage{http://www.Second.institution.edu/~Charlie.Author}
\affiliation{
 Dept. of Theoretical Physics, NIPNE---HH, Bucharest, Romania. \\
	Email: \href{mailto:cristi.stoica@theory.nipne.ro}{cristi.stoica@theory.nipne.ro},  \href{mailto:holotronix@gmail.com}{holotronix@gmail.com}
	}%

\date{\today}% It is always \today, today,
             %  but any date may be explicitly specified

\begin{abstract}
Physical systems are characterized by their structure and dynamics. But the physical laws only express relations, and their symmetries allow any possible relational structure to be also possible in a different parametrization or basis of the state space. If observers were reducible to their structure, observer-like structures from different parametrizations would identify differently the observables with physical properties. They would perceive the same system as being in a different state. This leads to the question: is there a unique correspondence between observables and physical properties, or this correspondence is relative to the parametrization in which the observer-like structure making the observation exists?

I show that, if observer-like structures from all parametrizations were observers, their memory of the external world would have no correspondence with the facts, it would be no better than random guess. Since our experience shows that this is not the case, there must be more to the observers than their structure. This implies that the correspondence between observables and physical properties is unique, and it becomes manifest through the observers. This result is independent of the measurement problem, applying to both quantum and classical physics. It has implications for structural realism, philosophy of mind, the foundations of quantum and classical physics, and quantum-first approaches.
\end{abstract}

\keywords{Foundations of quantum mechanics; foundations of classical physics; observers; structural realism; philosophy of mind; emergence; emergent space-time}

\maketitle
%\tableofcontents

%------------------------------------------------------------%
\section{Introduction}
\label{s:intro}

In this article I'll look into two apparently unrelated but strongly intertwined questions.
The first question is

\begin{question}
\label{question:correspondence-observables-meaning}
Is there an unambiguous correspondence between observables and physical properties?
\end{question}

Here by ``observable'' I mean the operator (on the quantum state space) or function (on the classical phase space) representing a physical property.

This question is relevant for various research programs aiming to recover the three-dimensional space and other physical properties and structures only from the quantum structure \cite{Carroll2021RealityAsAVectorInHilbertSpace,Giddings2019QuantumFirstGravity,Stoica2021SpaceThePreferredBasisCannotUniquelyEmergeFromTheQuantumStructure} or only from relations \cite{Rovelli1996RelationalQM}.

But we will see that this problem occurs even in standard Physics.
The physical laws alone don't give an unambiguous answer, because they only express relations. This makes them invariant with respect to a large group of reparametrizations of the state space.
In Classical Physics, these are the canonical transformations.
In Quantum Physics, they are unitary transformations.
This is similar to the invariance of the laws under transformations of spacetime as in Galilean and Special Relativity, where the conclusion is that even if space and time were absolute, as long as this doesn't transpire in the relation, all empirical predictions of relativity are correct.
So what gives physical meaning to observables?

The answer is given by the observers.
Observers make experiments, and establish a correspondence between observables and physical properties.
By ``observers'' I don't necessarily mean observers that ``collapse'' the wavefunction or play any role attributed to them to solve the measurement problem. In fact, the same problem appears in both Classical and Quantum Physics.

But the observers are physical systems, so they should also be subject to the physical laws.
This is often understood as implying that the observers should be completely reducible to their constituting relations, to their structure.
We will see that if this were true the observers would not be able to ascribe uniquely physical properties to the observables.
This leads us to revisiting a second question, whose answer is usually taken for granted,

\begin{question}
\label{question:observers-structure}
Are observers reducible to their structure?
\end{question}

The answer to Question~\ref{question:correspondence-observables-meaning} depends on the answer to Question~\ref{question:observers-structure}.
But how can we answer this question?

I will prove that, if the answer to Question~\ref{question:observers-structure} were affirmative, there would be no correlation between the observer's memory and the properties of external objects. In other words, observers would know nothing about the external world.
The reason boils down to the fact that the symmetry of the state space dissolves any such correlation, by allowing the external world to appear as having different properties in different parametrizations in which the structure of the observer is the same.
This doesn't happen if the only observers are the observer-like structures from a unique parametrization.

Section~\sref{s:quantum-physics} contains a review of the quantum formalism that will be applied in the rest of the article.
Section~\sref{s:symmetry-law} reviews the symmetry transformations of the state space and the physical laws.
Section~\sref{s:meta-relativity-observers} shows how these symmetries imply that the same structures can occur in any parametrization of the state space, and that the same state can appear as different structures in different parametrizations.
I prove the main result in Section~\sref{s:non-reducibility}.
Some physical aspects of the proof are discussed in Section~\sref{s:discussion}.
The article concludes with a discussion of some of the implications of the result in Section~\sref{s:implications}.

%------------------------------------------------------------%
\section{Physical laws and structures}
\label{s:quantum-physics}

Any structure of any system in the world, including the structure of the observers, is presumably a physical structure.
Therefore, here I review the physical structures and their dynamics, and the formalism used in the article.

I will use the quantum formalism, because the world is quantum, and all existing systems are ultimately quantum. But Classical Physics can be expressed in the same formalism \cite{Koopman1931HamiltonianSystemsAndTransformationInHilbertSpace}, and we will obtain the same result.

The state of a quantum system is represented by a \emph{state vector}, a complex vector $\ket{\psi}$ of unit length in an infinite-dimensional vector space $\hilbert$ called here \emph{state space}. The state space is a \emph{Hilbert space}, a special kind of vector space endowed with a complex scalar product with the Hermitian property $\braket{\psi}{\psi'}=(\braket{\psi'}{\psi})^\ast$, where $\ast$ is the complex conjugation.
The length of a vector $\ket{\psi}$ from $\hilbert$ is $\abs{\ket{\psi}}:=\sqrt{\braket{\psi}{\psi}}$.
If $\oper{A}:\hilbert\to\hilbert$ is a linear operator, there is a unique linear operator $\oper{A}^\dagger$, called the \emph{adjoint} of $\oper{A}$, so that $\bra{\psi}\oper{A}\ket{\psi'}=\(\bra{\psi'}\oper{A}^\dagger\ket{\psi}\)^\ast$ for all $\ket{\psi},\ket{\psi'}\in\hilbert$.

But since all unit vectors are identical under the symmetries of the state space, the vector alone is not sufficient to describe the structure and properties of a system.
The properties are represented by linear operators $\oper{A}:\hilbert\to\hilbert$, that are \emph{Hermitian}, \ie $\oper{A}^\dagger=\oper{A}$.
We call them \emph{observables}.
The property represented by $\oper{A}$ has a definite value $a$ for a state vector $\ket{\psi}$ if and only if $\ket{\psi}$ is an \emph{eigenvector} of $\oper{A}$ with the \emph{eigenvalue} $a$, that is,
\begin{equation}
\label{eq:eigen}
\oper{A}\ket{\psi}=a\ket{\psi}.
\end{equation}
Since $\oper{A}$ is Hermitian, it has only real eigenvalues.

In Quantum Physics, for a given state $\ket{\psi}$, only some of the properties have definite values -- any property $A$ whose observable $\oper{A}$ satisfies equation \eqref{eq:eigen} for $\ket{\psi}$ and some eigenvalue $a$, which is the value of $A$.

The state space admits a special basis, consisting of vectors uniquely identified by the combination of definite values of positions, components of spin, and internal degrees of freedom (\dofs) of different particles.
(I use the word ``particles'' because I work in the ``particle representation'' based on positions, but these are wavefunctions, not point-particles.)
What physical property each of these \dofs represents is very important, but for simplicity I will denote these values uniformly by $q_1,q_2,\ldots$, and the basis vectors by $\ket{q_1,q_2,\ldots}$.
The values $q_1,q_2,\ldots$ are eigenvalues of the operators $\wh{q}_1,\wh{q}_2,\ldots$, representing positions, components of spin, and internal \dofs.
Each basis vector satisfies the equation
\begin{equation}
\label{eq:basis-eigen}
\wh{q}_j\ket{q_1,q_2,\ldots}=q_j\ket{q_1,q_2,\ldots}
\end{equation}
for each of these operators $\wh{q}_j$ and the corresponding eigenvalue $q_j$.
The number and kinds of the operators $\wh{q}_j$ required to ``fill the slots'' of a vector $\ket{q_1,q_2,\ldots}$ depend on the number and type of particles whose state is represented by $\ket{q_1,q_2,\ldots}$.

All possible combinations of eigenvalues $(q_1,q_2,\ldots)$ of the operators $(\wh{q}_1,\wh{q}_2,\ldots)$ form a \emph{parameter space} $\mc{C}$, usually called \emph{position configuration space}.
Since the number and kind of the operators $\wh{q}_j$ depend on the number and type of particles, $\mc{C}$ is not a connected manifold, but a union of manifolds of various dimensions, each of them being the parameter space for systems of different numbers and types of particles with definite values for the spin and internal \dofs.

The \emph{wavefunction} for a state vector $\ket{\psi}\in\hilbert$ is a complex function $\psi:\mc{C}\to\C$ defined by
\begin{equation}
\label{eq:wavefunction}
\psi(q_1,q_2,\ldots):=\braket{q_1,q_2,\ldots}{\psi}.
\end{equation}

With the notation $\q=(q_1,q_2,\ldots)$, the wavefunction is
\begin{equation}
\label{eq:wf}
\psi(\q)=\braket{\q}{\psi}.
\end{equation}

But how exactly are the other physical properties represented by operators?
They all depend of the operators $\wh{q}_j$ and the \emph{momentum operators} $\wh{p}_k:=-i\hbar\frac{\partial}{\partial q_k}$ canonically conjugate to those $\wh{q}_k$ that are position operators, where $\hbar$ is the reduced Planck's constant. The operators $\wh{p}_k$ commute with one another and with $\wh{q}_j$ for $j\neq k$.
All other physical properties depend on the operators $\wh{q}_j$ and $\wh{p}_k$, so they are represented by operators of the form $\wh{f}(\wh{\q},\wh{\p})$, where $\wh{\q}=(\wh{q}_1,\wh{q}_2,\ldots)$ and $\wh{\p}=(\wh{p}_1,\wh{p}_2,\ldots)$.

The \emph{evolution equation} is, for any $\ket{\psi(0)}$ and any $t\in\R$,
\begin{equation}
\label{eq:unitary_evolution}
\ket{\psi(t)}=\oper{U}_t\ket{\psi(0)}.
\end{equation}
The \emph{evolution operators} $\oper{U}_t$ are defined by $\oper{U}_t=e^{-\frac{i}{\hbar}\oper{H}t}$, where $\oper{H}$ is the \emph{Hamiltonian operator}.
The operators $\oper{U}_t$ are \emph{unitary operators}, that is, they preserve the structure of the state space $\hilbert$, including the scalar product. An operator $\oper{U}$ is unitary if and only if $\oper{U}^{-1}=\oper{U}^\dagger$.
Equation \eqref{eq:unitary_evolution} gives the solutions of the {\schrod} equation
\begin{equation}
\label{eq:schrod}
i\hbar\frac{d}{dt}\ket{\psi(t)}=\oper{H}\ket{\psi(t)}.
\end{equation}
%The evolution operators form a one-parameter group,
%\begin{equation}
%\label{eq:dynamics}
%\begin{cases}
%\oper{U}_0&=\oper{I},\tn{ where $\oper{I}$ is the identity operator on }\hilbert\\
%\oper{U}_{t}\oper{U}_{t'}&=\oper{U}_{t+t'}\tn{  for all $t,t'\in\R$}.
%\end{cases}
%\end{equation}

The propagation of the wavefunction on the parameter space $\mc{C}$ is therefore given by $\psi(\q,t)=\braket{\q}{\psi(t)}$.

The {\schrod} equation expressed in terms of the parameters from $\mc{C}$, that is, in the basis of position eigenvectors $\(\ket{q_1,q_2,\ldots}\)_{(q_1,q_2,\ldots)\in\mc{C}}$, has the following form in terms of the Hamiltonian matrix $\bra{\q}\oper{H}\ket{\wt{\q}}$
\begin{equation}
\label{eq:schrod-position}
i\hbar\frac{d}{dt}\braket{\q}{\psi(t)}
=\int_{\wt{\q}\in\mc{C}}\bra{\q}\oper{H}\ket{\wt{\q}}\braket{\wt{\q}}{\psi(t)}d\wt{\q}.
\end{equation}

\begin{remark}[Physical structures]
\label{rem:metaobserver}
The wavefunction $\psi(\q,t)$ contains the complete information about the system. This is true even if we interpret $\psi(\q,t)$ probabilistically and include collapses in its evolution.
Therefore, an omniscient being somewhat similar to the Laplace demon, let's call it the \emph{metaobserver}, should be able to read everything about the system by examining the patterns of $\psi(\q,t)$. 
The structure of any system is manifest in these patterns. 
For example, the structure of a separable state of two particles is different from that of an entangled state, and this is visible in the pattern of the wavefunction of the separable state because the wavefunction is factorizable, \ie the variables $q_j$ of the first particle are separable from those of the second particle.
\qed
\end{remark}

\begin{remark}[Physical space]
\label{rem:space}
For a particle, we can identify as physical space the space parametrized by $(q_1,q_2,q_3)$ and finitely many additional parameters for the spin and internal \dofs.
If there is another particle with position space $(q_4,q_5,q_6)$, and the interactions between them depend on the distance $\sqrt{\abs{q_4-q_1}^2+\abs{q_5-q_2}^2+\abs{q_6-q_3}^2}$, this determines an identification between the spaces $(q_1,q_2,q_3)$ and $(q_4,q_5,q_6)$. This identification extends to how many particles exist, as long as they participate in such interactions. In nonrelativistic Quantum Mechanics, this works because the potential depends on the distance \cite{Albert1996ElementaryQuantumMetaphysics}.
In Quantum Field Theory fields interact by local coupling, which allows the identification between the spaces of the fields.
Also see \cite{Stoica2021SpaceThePreferredBasisCannotUniquelyEmergeFromTheQuantumStructure}.
\qed
\end{remark}

%------------------------------------------------------------%
\section{Symmetries of the physical laws}
\label{s:symmetry-law}

When we examine the wavefunction on the parameter space $\mc{C}$, we assume a preferred basis for $\hilbert$, consisting of vectors of the form $\ket{q_1,q_2,\ldots}$.
But it is possible to change this basis by applying a \emph{unitary transformation} (a linear transformation that preserves the structure of the state space).
A unitary transformation $\oper{S}$ transforms any operator $\oper{A}$ into another operator $\oper{A}'=\oper{S}\oper{A}\oper{S}^\dagger$.

In particular, we get $\wh{q}'_j=\oper{S}\wh{q}_j\oper{S}^\dagger$ and $\wh{p}'_k=\oper{S}\wh{p}_k\oper{S}^\dagger$.
The eigenvalues of the operators $\wh{q}'_j=\oper{S}\wh{q}_j\oper{S}^\dagger$ form a parameter space $\mc{C}'$, in general different from $\mc{C}$, which I will denote by $\oper{S}(\mc{C})$.
If the transformation $\oper{S}$ represents a change of the reference frame in space or spacetime, $\oper{S}(\mc{C})=\mc{C}$, although the parametrization will be different. But in general $\oper{S}(\mc{C})\neq\mc{C}$.

Any operator $\wh{f}(\wh{\q},\wh{\p})$ is transformed into an operator
\begin{equation}
\label{eq:transform-q-prop}
\wh{f}'(\wh{\q},\wh{\p})=\oper{S}\wh{f}(\wh{\q},\wh{\p})\oper{S}^\dagger=\wh{f}(\wh{\q}',\wh{\p}'),
\end{equation}
where $\wh{f}(\wh{\q}',\wh{\p}')$ depends functionally on $\wh{\q}'$ and $\wh{\p}'$ in the same way $\wh{f}(\wh{\q},\wh{\p})$ depends on $\wh{\q}$ and $\wh{\p}$.

By changing the basis with $\oper{S}$, the {\schrod} equation \eqref{eq:schrod} is transformed into 
$i\hbar\frac{d}{dt}\ket{\psi'(t)}=\oper{H}'\ket{\psi'(t)}$, where $\ket{\psi'(t)}=\oper{S}^\dagger\ket{\psi(t)}$ and $\oper{H}'=\oper{S}^\dagger\oper{H}\oper{S}$.
Therefore, a quantum system remains a quantum system under symmetry transformations.

In the new basis $\(\ket{q'_1,q'_2,\ldots}\)_{(q'_1,q'_2,\ldots)\in\mc{C}'}$ resulting from changing the basis with the transformation $\oper{S}$, the components of the state vector $\ket{\psi}$ are $\psi'(\q',t)=\braket{\q'}{\psi(t)}=\bra{\q}\oper{S}^\dagger\ket{\psi}$, the Hamiltonian matrix is $\bra{\q}\oper{S}^\dagger\oper{H}\oper{S}\ket{\wt{\q}}$, and equation \eqref{eq:schrod-position} takes the form
\begin{equation}
\label{eq:schrod-new-basis}
i\hbar\frac{d}{dt}\braket{\q'}{\psi(t)}
=\int_{\wt{\q}'\in\mc{C}'}\bra{\q'}\oper{H}\ket{\wt{\q}'}\braket{\wt{\q'}}{\psi(t)}d\wt{\q}',
\end{equation}
which is equivalent with
\begin{equation}
\label{eq:schrod-new-basis-old}
i\hbar\frac{d}{dt}\bra{\q}\oper{S}^\dagger\ket{\psi(t)}
=\int_{\wt{\q}\in\mc{C}}\bra{\q}\oper{S}^\dagger\oper{H}\oper{S}\ket{\wt{\q}}\bra{\wt{\q}}\oper{S}^\dagger\ket{\psi(t)}d\wt{\q}.
\end{equation}

We see that the {\schrod} equation and the Hamiltonian matrix have the same form in the new basis if and only if $\oper{S}\oper{H}=\oper{H}\oper{S}$.
This is a special type of symmetry transformation:
\begin{definition}
\label{def:isonomy}
An \emph{isonomy} (or \emph{structural symmetry transformation}) is a symmetry transformation $\oper{S}$ that preserves the form of the physical laws.

If $\oper{S}$ is an isonomy, we say that $\oper{S}\oper{A}\oper{S}^\dagger$ is \emph{isonomic} with $\oper{A}$, and that $\oper{S}(\mc{C})$ is isonomic with $\mc{C}$.
\end{definition}

A transformation $\oper{S}$ is a structural symmetry transformation if and only if it is a symmetry of the Hamiltonian, \ie it commutes with the Hamiltonian operator $\oper{H}$, \ie $\oper{S}\oper{H}=\oper{H}\oper{S}$. Then, $\oper{S}\oper{H}\oper{S}^\dagger=\oper{H}$, so indeed the dynamics from \eqref{eq:schrod-new-basis-old} follows the same law as that from \eqref{eq:schrod-position}.

\begin{remark}
\label{rem:identical-structures}
For a state vector $\ket{\psi}$, the wavefunction $\braket{\q'}{\psi}$ on $\mc{C}'$ contains, in general, different structures than the wavefunction $\braket{\q}{\psi}$ on $\mc{C}$.
For example, plane waves in the position space appear as wavefunctions concentrated at points in the momentum space (the Fourier transform of the position space), and vice versa.
Only Gaussian wavefunctions appear as Gaussians in both the position space and the momentum space, though not in all other parameter spaces.
Even if $\oper{S}\oper{H}=\oper{H}\oper{S}$, the structures on $\mc{C}'$ are in general very different from those on $\mc{C}$.

Similarly, the wavefunction $\braket{\q'}{\psi'}$ on $\mc{C}'$, corresponding to another state vector $\ket{\psi'}=\oper{S}\ket{\psi}$,  looks exactly like the wavefunction $\psi$ on $\mc{C}$. This is because the unitary operator $\oper{S}$ preserves the scalar product, so the scalar product between $\ket{\q'}=\oper{S}\ket{\q}$ and $\ket{\psi'}=\oper{S}\ket{\psi}$ is equal, for all $\q\in\mc{C}$, to the scalar product between $\ket{\q}$ and $\ket{\psi}$, 
\begin{equation}
\label{eq:structures-change-basis}
\braket{\q'}{\psi'}=\bra{\q}\oper{S}^\dagger\oper{S}\ket{\psi}=\braket{\q}{\psi}.
\end{equation}

Then, any structure possible on the parameter space $\mc{C}$ is also possible on any other parameter space $\oper{S}(\mc{C})$.
\qed
\end{remark}

%------------------------------------------------------------%
\section{Meta-Relativity of observers}
\label{s:meta-relativity-observers}

We attribute a special, unique physical meaning to the operators $\wh{q}_j$.
But due to its symmetries, the quantum formalism alone doesn't distinguish them from other choices $\wh{q}'_j$.
It doesn't allow any parameter space $\mc{C}$ to be special among all possible parameter spaces $\oper{S}(\mc{C})$.
What breaks the huge symmetry of the state space?

We attribute physical meaning to the operators because of the experiments.
But how would this work?

Suppose that the wavefunction $\braket{\q}{\psi}$ on $\mc{C}$ contains observer-like structures performing experiments. If they are conscious, they would consider that the parameters $q_j$ correspond to positions in the physical space.

Let $\mc{C}'=\oper{S}(\mc{C})$ be another parameter space, parametrized by the eigenvalues of $\wh{q}'_j=\oper{S}\wh{q}_j\oper{S}^\dagger$. On $\mc{C}'$ there are wavefunctions $\braket{\q'}{\psi'}$ that also contain observer-like structures.
For example, if $\ket{\psi'}=\oper{S}\ket{\psi}$, the structures from $\braket{\q'}{\psi'}$ are identical to those from $\braket{\q}{\psi}$, and if $\oper{S}$ commutes with $\oper{H}$, they evolve identically.

Then, if observers are reducible to the structures, no experiment can tell them in what parameter space they live. It could be $\mc{C}$ or any other parameter space $\oper{S}(\mc{C})$.
\begin{proposition}
\label{thm:alternative-positions}
Observer-like structures from different parameter spaces identify different physical spaces.
\end{proposition}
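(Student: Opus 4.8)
The plan is to show that the procedure by which an observer-like structure fixes its physical space is intrinsic and relational, so that the same procedure, applied inside two isonomic parameter spaces, yields two different spaces. First I would recall the identification mechanism of Remark~\ref{rem:space}: an observer-like structure in $\mc{C}$ singles out its physical space not by fiat but through the pattern of interactions, namely the fact that the dynamics couples certain triples of coordinates through their differences (distances in nonrelativistic mechanics, local couplings in field theory). This criterion is phrased entirely in terms of the relational data available within the wavefunction $\braket{\q}{\psi}$, and it selects the coordinates $(q_1,q_2,q_3)$ of each particle as the carriers of physical space.

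The core step is to transport this construction to $\mc{C}'=\oper{S}(\mc{C})$ using Remark~\ref{rem:identical-structures}. Taking $\ket{\psi'}=\oper{S}\ket{\psi}$, equation~\eqref{eq:structures-change-basis} gives $\braket{\q'}{\psi'}=\braket{\q}{\psi}$, so the very same observer-like structure reappears on $\mc{C}'$, and by Definition~\ref{def:isonomy} it evolves by the same law whenever $\oper{S}\oper{H}=\oper{H}\oper{S}$. By equation~\eqref{eq:transform-q-prop} the interaction operators are carried to functionally identical operators in the primed coordinates, $\wh{f}(\wh{\q}',\wh{\p}')$. Hence the space-identification criterion of Remark~\ref{rem:space}, being intrinsic to the structure, is applied verbatim by the copy on $\mc{C}'$, except that now it is the coordinate differences of the primed operators $\wh{q}'_j$ that enter the couplings. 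The observer-like structure on $\mc{C}'$ therefore identifies the space parametrized by $(q'_1,q'_2,q'_3)$ as its physical space.

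To finish I would invoke that $\oper{S}(\mc{C})$ is in general a genuinely different parameter space from $\mc{C}$ (the eigenvalues of $\wh{q}'_j=\oper{S}\wh{q}_j\oper{S}^\dagger$ differ from those of $\wh{q}_j$), so the two identifications disagree; one restricts to the generic transformations with $\oper{S}(\mc{C})\neq\mc{C}$, setting aside the degenerate case of a spatial change of frame for which $\oper{S}(\mc{C})=\mc{C}$. The hard part is the claim that the identification is genuinely \emph{intrinsic}: I must argue that the observer on $\mc{C}'$ has no privileged access to the unprimed coordinates and can use only the relational interaction data inside its own wavefunction, which by construction is isomorphic to the unprimed data but realized on the primed coordinates. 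This is exactly the content secured by equations~\eqref{eq:structures-change-basis} and~\eqref{eq:transform-q-prop}, so the obstacle is conceptual rather than computational, and the proof reduces to making the relational reading of Remark~\ref{rem:space} precise enough that it is manifestly covariant under the change of coordinate form.
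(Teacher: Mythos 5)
Your overall route is the same as the paper's: transport the space-identification criterion of Remark~\ref{rem:space} along $\oper{S}$ (via equations~\eqref{eq:structures-change-basis} and~\eqref{eq:transform-q-prop}) and conclude that the isomorphic copy on $\mc{C}'$ applies the criterion verbatim to the primed coordinates. Your first two steps are a more explicit version of what the paper's proof leaves implicit, and they are fine.

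But your closing step contains a genuine error. You justify that $\oper{S}(\mc{C})$ differs from $\mc{C}$ by claiming that ``the eigenvalues of $\wh{q}'_j=\oper{S}\wh{q}_j\oper{S}^\dagger$ differ from those of $\wh{q}_j$.'' Unitary conjugation is isospectral: $\wh{q}'_j$ has exactly the same eigenvalues as $\wh{q}_j$, so as a set of parameter tuples $\oper{S}(\mc{C})$ coincides with $\mc{C}$; what changes are the \emph{eigenvectors}, i.e., which physical states each tuple of values labels. Because of this, your delimitation of the degenerate case is also wrong, or at least too narrow: the cases in which the two identifications coincide are not merely spatial changes of frame, but all transformations for which the operators $\wh{\q}'$ commute with all the $\wh{\q}$ --- equivalently, are functions $\wh{q}'_j(\wh{\q})$ independent of $\wh{\p}$ --- since any such reparametrization of the same configuration space yields the same physical space under Remark~\ref{rem:space}. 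This commutation criterion is precisely how the paper's proof separates ``different operator tuples'' from ``genuinely different parameter space,'' and it is what your appeal to genericity needs in place of the false spectral claim. Replace the eigenvalue argument with the commutation criterion and your proof closes along the paper's lines.
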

\begin{proof}
Observer-like structures from different parameter spaces $\mc{C}$ and $\mc{C}'$ use different sets of operators $\wh{\q}$ and $\wh{\q}'$ to represent positions in space.
Then, the identification of the physical space from Remark~\ref{rem:space} leads to different results.
The parameter spaces $\mc{C}$ and $\mc{C}'$ coincide only if the operators $\wh{\q}'$ commute with all $\wh{\q}$, and this happens when they are all functions of $\wh{\q}$ independent of $\wh{\p}$.
Otherwise, from Remark~\ref{rem:space}, the resulting physical spaces will be different.
\end{proof}

Consequently, they perceive the same state as differently structured with respect to the physical space.

\begin{remark}
\label{rem:concealed-space}
Note that even if there is an objectively unique physical space, even if its associated observables are ontologically special, whatever this means, Proposition~\ref{thm:alternative-positions} implies that observer-like structures from $\mc{C}$ and $\mc{C}'$ still identify different physical spaces.
What is physical space to an observer-like structure on $\mc{C}$, to an observer-like structure on $\mc{C}'$ it appears as a space consisting of other three \dofs, associated to different physical properties.
And the same happens for all other observables.
\qed
\end{remark}

Let's extract these findings in the form of a principle.

\begin{principle}[Meta-Relativity]
\label{pp:meta-relativity-structure}
Observer-like structures on any two parameter spaces $\mc{C}$ and $\mc{C}'$ agree upon the laws of physics if and only if $\mc{C}$ and $\mc{C}'$ are isonomic.
But in general they disagree about the physical properties associated with the observables and about the physical space.
\end{principle}

Neither the relations that we can extract from experiments nor the theory can determine the physical meaning of the operators. The physical meaning of the operators is relative to the parameter space, in the sense that observers from a parameter space $\mc{C}$ have a different physical interpretation of the operators compared to the observers from another parameter space $\mc{C}'$.
But all observers from the same parameter space agree upon the physical meaning of the operators.

Principle~\ref{pp:meta-relativity-structure} is very similar to the Principle of Relativity. For isometric coordinate transformations in space from $(x,y,z)$ to $(x',y',z')$, different observers agree upon the physical laws. But there is no way to tell that the coordinates $(x,y,z)$ are special compared to $(x',y',z')$.
The Poincar\'e transformations that appear in Special Relativity are particular structural symmetry transformations.

Principle~\ref{pp:meta-relativity-structure} extends the Principle of Relativity, so I chose to name it ``the Principle of \nameref{pp:meta-relativity-structure}''.
But it extends the Principle of Relativity only for structures, not for their physical meaning. Principle of Relativity remains true about the physical meaning of spacetime.

All we can access by \emph{intersubjectively verifiable experiments} are the relations. Relations allow us to build mathematical models of the world. 
The nature of the relata is outside the realm of intersubjectively verifiable experiments.
This truth was noticed in one form or another by various philosophers, notably Poincar\'e \cite{Poincare2022ScienceAndHypothesis} and Russell \cite{Russell1927AnalysisOfMatter}, and it is called \emph{epistemic structural realism} \cite{sep-structural-realism}.
Epistemic structural realism seems to apply to science, because 
\begin{enumerate}
	\item 
No intersubjectively verifiable experiment can go beyond the relations.
	\item 
No theoretical model can go beyond relations. Logically consistent theories admit faithful mathematical models in terms of mathematical structures \cite{ChangAndKeisler1990ModelTheory,Hodges1997ShorterModelTheory}. But mathematical structures themselves are nothing but sets and relations \cite{Gratzer2008UniversalAlgebra}, and the elements of the sets are characterized exclusively by the relations in which they participate.
\end{enumerate}

\begin{remark}
\label{rem:concealed-basis}
Even if $(\wh{q}_1,\wh{q}_2,\ldots)$ have a special ontic status compared to other choices $(\wh{q}'_1,\wh{q}'_2,\ldots)$, this can't be assessed by intersubjectively verifiable empirical means.
These means can only establish that different observer-like structures from the same parameter space assign the same physical properties to the observables.

If there were objective means to determine $(\wh{q}_1,\wh{q}_2,\ldots)$ as the preferred properties, then we would be able to obtain them from the abstract state vector $\ket{\psi}$ and the abstract Hamiltonian $\oper{H}$, where ``abstract'' means that they are not expressed in a preferred basis or parametrization.
In the case of the Hamiltonian, this means its spectrum, including the multiplicities.
But it's impossible to obtain them uniquely \cite{Stoica2021SpaceThePreferredBasisCannotUniquelyEmergeFromTheQuantumStructure,Stoica2024DoesTheHamiltonianDetermineTheTPSAndThe3dSpace}.

We can claim that the properties represented by the operators $(\wh{q}_1,\wh{q}_2,\ldots)$ associated to our parameter space are ontologically ``more real'' than other possible choices $(\wh{q}'_1,\wh{q}'_2,\ldots)$. But this doesn't solve the problem, since all operators $\wh{f}(\wh{\q},\wh{\p})$ are ``as real'' as $\wh{q}_j$ and $\wh{p}_k$.
In particular, any other choice $(\wh{q}'_1,\wh{q}'_2,\ldots)$ is ``as real'' as $(\wh{q}_1,\wh{q}_2,\ldots)$, because any $\wh{q}'_j$ is also a function $\wh{q}'_j=\wh{q}'_j(\wh{\q},\wh{\p})$.
\qed
\end{remark}

\begin{remark}
\label{rem:concealed-variables}
Even if there are some unknown hidden structures that break the symmetry so that $(\wh{q}_1,\wh{q}_2,\ldots)$ are special compared to $(\wh{q}'_1,\wh{q}'_2,\ldots)$ the problem remains. If we don't have empirical access to those hidden structures, they are useless. And if we will be able to access them, we will have to complete our theory. But the new theory will also be relational, so it will have its own large symmetry group, hence it will have the same problem.

For example, consider that we add point-particles with definite positions, as in the \emph{pilot-wave theory} \cite{BohmHiley1993UndividedUniverse}.
Then, the symmetry transformations are canonical transformations of the phase space of point-particles, done in tandem with unitary transformations of the pilot wave.
The classical system of point-particles can be described using the quantum formalism (see Corollary~\ref{thm:conscious-observer-classical}), coupled with the quantum system of the pilot wave.
Therefore, the quantum formalism and the discussions from this article still apply, and adding point-particles can't avoid the conclusion.
\qed
\end{remark}

%------------------------------------------------------------%
\section{Are observers reducible to structures?}
\label{s:non-reducibility}

Given that there is no physical way to determine if an observer-like structure is special compared to an isomorphic structure from another parameter space, Principle of \nameref{pp:meta-relativity-structure} leads to the following question:
\begin{question}
\label{question:all-structures}
Are the observer-like structures from all parameter spaces conscious?
\end{question}

A more specific variant of Question~\ref{question:all-structures} is the following one. If the wavefunction on a parameter space $\mc{C}$ contains observer-like structures that are conscious, and if isomorphic structures can be found on another parameter space $\mc{C}'$, are the latter structures conscious as well?

If consciousness is reducible to structures, the Principle of \nameref{pp:meta-relativity-structure} implies that all observer-like structures in all parameter spaces are conscious.
But if it turns out that not all observer-like structures in all parameter spaces are conscious, this would mean that there is more to consciousness than just the structure.

\begin{theorem}
\label{thm:conscious-observer}
If observers were reducible to structures, they would know nothing about the external world.
\end{theorem}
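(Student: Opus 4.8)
The plan is to formalize ``knowing the external world'' as a correlation between a part of the observer's structure---its memory register---and a property of an external object, and then to show that the symmetry group of the state space scrambles this correlation once every isomorphic observer-like structure is granted the status of a genuine observer. I would model the universe as a bipartite system, an observer $O$ together with an external object $X$ whose property is represented in the home parameter space $\mc{C}$ by an observable with eigenvalues $v$. A faithful measurement leaves the joint state in a configuration in which the observer's memory records $m=v$ precisely when $X$ has property $v$; within a single outcome this is the product pattern $\ket{m=v}_O\otimes\ket{\text{prop}=v}_X$, so on $\mc{C}$ the memory and the external property are perfectly correlated. The content of the theorem is that this correlation is an artifact of privileging $\mc{C}$.

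First I would exhibit, for the fixed physical state, a family of parameter spaces in which the observer's intrinsic structure is unchanged but the external property is relabelled arbitrarily. Taking $\oper{S}_\sigma=\oper{1}_O\otimes\oper{S}_{X,\sigma}$ acting only on the degrees of freedom of $X$, with $\oper{S}_{X,\sigma}$ permuting the property eigenstates, the new parameter space $\mc{C}_\sigma=\oper{S}_\sigma(\mc{C})$ leaves the observer's basis untouched. Reading the same wavefunction off $\mc{C}_\sigma$ therefore reproduces the observer's memory structure verbatim---indeed the reduced state $\tr_X\ketbra{\Psi}{\Psi}$ on $O$ is literally the same operator---while the external property now reads $\sigma(v)$ rather than $v$, by the computation underlying Remark~\ref{rem:identical-structures} and eq.~\eqref{eq:structures-change-basis}. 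By Proposition~\ref{thm:alternative-positions} and Remark~\ref{rem:concealed-space} these relabellings are genuine reinterpretations of the external physical property, not mere changes of name.

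Next I would invoke the reducibility hypothesis together with Principle~\ref{pp:meta-relativity-structure}: if observers supervene on structure alone, the identical observer structure in each $\mc{C}_\sigma$ is an equally genuine observer, and no intersubjectively verifiable fact distinguishes its home parameter space. Hence the external property paired with the memory value $v$ is $\sigma(v)$ for an indeterminate $\sigma$ ranging over the full symmetric family. Averaging over this family with the uniform measure inherited from the symmetry group, every external value occurs equally often for a given memory value, so the conditional distribution of the external property given the memory is uniform and the mutual information between them vanishes. The observer's memory is then uncorrelated with the external world---no better than random guess---which is the assertion of the theorem; the same construction runs in the Koopman--von Neumann formulation, giving the classical case.

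The step I expect to be the main obstacle is the second one, made rigorous: I must argue that the ``observer structure'' on which consciousness and memory are supposed to supervene is genuinely preserved under $\oper{S}_\sigma$ even when $O$ and $X$ are entangled, which requires identifying that structure with the intrinsic (reduced) data on $O$ rather than with the joint $O$--$X$ correlation pattern, and defending that identification as exactly what ``reducible to structure'' should mean. A secondary difficulty is justifying the uniform average: one must show that the relevant parameter spaces are truly on equal footing---so that no residual piece of structure privileges the correct pairing---and fix the measure over $\sigma$ so that ``no better than random guess'' becomes a theorem rather than a slogan.
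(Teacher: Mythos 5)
Your skeleton matches the paper's in its first and last moves (exhibit relabelled parameter spaces with the same observer structure, then argue uniformity over the alternatives), but it is missing the paper's central Step~\ref{step:thm:conscious-observer:preserve}, and your specific construction fails exactly where that step is needed. You take $\oper{S}_\sigma=\oper{1}_O\otimes\oper{S}_{X,\sigma}$, a product unitary permuting the property eigenstates of $X$. Nothing in your argument makes $\oper{S}_\sigma$ commute with the total Hamiltonian $\oper{H}$, and generically it cannot: $\oper{H}$ contains $X$'s free dynamics and the $O$--$X$ interaction, neither of which commutes with an arbitrary permutation of eigenstates of $\oper{A}_{\varepsilon}$, and a product-form unitary commuting with an interacting $\oper{H}$ essentially never exists. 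Without $[\oper{S},\oper{H}]=0$ the transformed parameter space is not isonomic with $\mc{C}$ (Definition~\ref{def:isonomy}), so Principle~\ref{pp:meta-relativity-structure} --- which you invoke to promote the relabelled structures to genuine observers --- does not apply: the structure in $\mc{C}_\sigma$ obeys a differently-shaped law, it does not \emph{evolve} like the original observer, and a defender of ``observers reduce to structure \emph{and dynamics}'' can simply deny it observer status. This is precisely the loophole the paper closes by citing Lemma~1 of \cite{Stoica2023PrinceAndPauperQuantumParadoxHilbertSpaceFundamentalism}, which produces a (generally non-product) unitary $\oper{S}$ with $\oper{S}\ket{\psi}=\ket{\psi'}$ \emph{and} $\oper{S}\oper{H}=\oper{H}\oper{S}$, handling unequal eigenspace multiplicities via a finer commuting observable $\oper{\wt{A}}_{\varepsilon}$ --- a multiplicity issue your ``permutation'' also glosses over, since no unitary permutes eigenspaces of different dimensions.

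Relatedly, your identification of the preserved observer structure with the reduced state $\tr_X\ketbra{\Psi}{\Psi}$ (which you flag yourself as the main obstacle) is the wrong invariant, and it is what forces you into the product form. The paper never requires the reduced operator on $O$ to be literally unchanged; structure-identity is read off the reparametrized wavefunction, $\braket{\q'}{\psi'}=\braket{\q}{\psi}$ as in equation~\eqref{eq:structures-change-basis}, so a non-product isonomy $\oper{S}$ still reproduces the observer pattern verbatim on $\oper{S}(\mc{C})$ while changing the environment's recorded property from $a$ to $a'$. Adopting that notion of structure dissolves your entanglement worry and frees you to choose $\oper{S}$ in the commutant of $\oper{H}$. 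Your uniform-measure step then goes through for the reason the paper gives (any non-uniform measure on the isonomic alternatives would smuggle in a preferred parameter space), and your closing remark about the Koopman--von Neumann formulation correctly anticipates Corollary~\ref{thm:conscious-observer-classical}. As it stands, though, the proposal proves only that non-isonomic relabellings scramble the memory--world correlation, which the structuralist can dismiss; the theorem needs the scrambling to survive within the symmetries of the physical law itself.
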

\begin{proof}
I will prove the theorem in three steps:

Step~\ref{step:thm:conscious-observer:anything}. Show that for any observer-like structure whose memory is correlated with its environment, there are parameter spaces on which the wavefunction of the same state contains identical observer-like structures, but whose environment can appear to be in any possible state.
In particular, if an observer-like structure knows that the value of a property of the environment is $a$ and it really is $a$, there is an identical observer-like structure with the same structure, but for whom that property of the environment can have any other possible value $a'$.

Step~\ref{step:thm:conscious-observer:preserve}. Show that the transformation connecting the two parameter spaces can be chosen to commute with the total Hamiltonian, so that the evolution law for the two observer-like structures is the same.

Step~\ref{step:thm:conscious-observer:knows-nothing}. Show that the alternative parameter spaces with different environments are uniformly distributed, so that there is no correlation between the memory of a generic observer-like structure and its environment.

The observer is a subsystem of the universe. We can represent the state of the universe by
\begin{equation}
\label{eq:observer-state}
\ket{\psi}=\ket{\psi_{\omega}}\ket{\psi_{\varepsilon}},
\end{equation}
where $\ket{\psi_{\omega}}\in\hilbert_{\omega}$ represents the observer, $\ket{\psi_{\varepsilon}}\in\hilbert_{\varepsilon}$ represents the rest of the world, and $\hilbert\cong\hilbert_{\omega}\otimes\hilbert_{\varepsilon}$.
The following proof can be adapted easily for an observer entangled with the environment.

The parameter space decomposes as a Cartesian product $\mc{C}\cong\mc{C}_{\omega}\times\mc{C}_{\varepsilon}$, on which the wavefunction is
\begin{equation}
\label{eq:observer-wf}
\psi(\q_{\omega},\q_{\varepsilon})=\psi_{\omega}(\q_{\omega})\psi_{\varepsilon}(\q_{\varepsilon}),
\end{equation}
where $\q_{\omega}\in\mc{C}_{\omega}$ and $\q_{\varepsilon}\in\mc{C}_{\varepsilon}$.

Since the observer can only access directly her own present state of mind, it is sufficient that the state vector $\ket{\psi_{\omega}}$ represents the brain of the observer.
However, to humor anyone who would object to this, let us assume that $\ket{\psi_{\omega}}$ represents a more extended system that contains the observer. For example, suppose that $\ket{\psi_{\omega}}$ represents a room in which the observer presently is.

\begin{proofStep}
\label{step:thm:conscious-observer:anything}
The observer's memory contains information about various properties of various external systems.
For example, if she remembers that in the corner of her kitchen there is a table, her memory contains information about the approximate size of the kitchen and the table, and their relative position.
So our observer expects that, if she goes to the kitchen, she will find these to be true.
Since the state of the external objects is characterized by their physical properties, let us choose such a property, represented by an observable $\oper{A}_{\varepsilon}$ (seen as an operator on the entire space $\hilbert$), so that $\ket{\psi}$ is an eigenvector of $\oper{A}_{\varepsilon}$ with eigenvalue $a$.

Now consider another state $\ket{\psi'}=\ket{\psi_{\omega}}\ket{\psi_{\varepsilon}'}$, in which the observer has the exact same structure, but so that $\ket{\psi'}$ is an eigenvector of $\oper{A}_{\varepsilon}$ with a different eigenvalue $a'$.
The wavefunction of $\ket{\psi'}$ on the parameter space $\mc{C}$ is $\psi'(\q_{\omega},\q_{\varepsilon})=\psi_{\omega}(\q_{\omega})\psi'_{\varepsilon}(\q_{\varepsilon})$.
According to  Remark \ref{rem:identical-structures}, there is a unitary transformation $\oper{S}$ that maps $\ket{\psi}$ into $\ket{\psi'}$.
On the resulting parameter space $\mc{C}':=\oper{S}\(\mc{C}\)$, $\ket{\psi'}$ has the same form as $\ket{\psi}$ on $\mc{C}$.
\end{proofStep}

\begin{proofStep}
\label{step:thm:conscious-observer:preserve}
Now I will prove the existence of a state $\ket{\psi'}$ in which the observer has the same structure but the property $\oper{A}_{\varepsilon}$ of the environment has the value $a'$, and the two states $\ket{\psi}$  and $\ket{\psi'}$ can be related by a unitary transformation $\oper{S}$ that commutes with the Hamiltonian.

In \cite{Stoica2023PrinceAndPauperQuantumParadoxHilbertSpaceFundamentalism} Lemma 1 it was shown that such a symmetry transformation exists, if the observable $\oper{A}_{\varepsilon}$ has eigenspaces of the same dimension, and for every eigenvalue $a$, $-a$ is also an eigenvalue. Then, there is a unitary transformation $\oper{S}$ commuting with $\oper{H}$ so that $\oper{S}\ket{\psi}=\ket{\psi'}$.
This extends easily to a generic observable $\oper{A}_{\varepsilon}$ and two eigenvalues $a\neq a'$ of different multiplicities.
We can always choose instead of $\oper{A}_{\varepsilon}$ another observable $\oper{\wt{A}}_{\varepsilon}$ having the same eigenspaces, but different eigenvalues, so that $a=-a'$.
Then if a state is an eigenstate of $\oper{\wt{A}}_{\varepsilon}$, it is also an eigenstate of $\oper{A}_{\varepsilon}$, and we can deduce the eigenvalue of $\oper{A}_{\varepsilon}$ from the eigenvalue of $\oper{\wt{A}}_{\varepsilon}$.
Depending on whether the eigenspace of $\oper{A}_{\varepsilon}$ corresponding to the eigenvalue $a$ is smaller, equal, or greater than the eigenspace corresponding to the eigenvalue $a'$, there are three cases.
By a space ``smaller'' than another space I understand here that there is a unitary transformation mapping the first space into a strict subspace of the second space.
If the eigenspaces are equal, we can apply Lemma 1 from \cite{Stoica2023PrinceAndPauperQuantumParadoxHilbertSpaceFundamentalism}.
If one of them is smaller than the other, we choose the observable $\oper{\wt{A}}_{\varepsilon}$ to commute with $\oper{A}_{\varepsilon}$ and so that the states $\ket{\psi}$ and $\ket{\psi'}$ are contained in eigenspaces of $\oper{\wt{A}}_{\varepsilon}$ of the same dimension.
To make sure that $\oper{\wt{A}}_{\varepsilon}$ contains the necessary information to find out the eigenvalues of $\oper{A}_{\varepsilon}$ from those of $\oper{\wt{A}}_{\varepsilon}$, we can choose $\oper{\wt{A}}_{\varepsilon}$ to be finer, in the sense that all its eigenspaces are included in eigenspaces of $\wt{A}$.
Therefore, we can apply Lemma 1 from \cite{Stoica2023PrinceAndPauperQuantumParadoxHilbertSpaceFundamentalism}, and obtain that there is a structural symmetry transformation $\oper{S}$ mapping $\ket{\psi}$ to $\ket{\psi'}$.
We don't need to perform the measurement, the whole point is only to prove the existence of a structural symmetry transformation $\oper{S}$ mapping $\ket{\psi}$ to $\ket{\psi'}$.
\end{proofStep}

\begin{proofStep}
\label{step:thm:conscious-observer:knows-nothing}
We need to find out the probability measure for the possible alternative wavefunctions $\psi_{\varepsilon}(\q'_{\varepsilon})$ of the environment resulting from a structural symmetry transformation from $\psi$.
Since with any wavefunction $\psi_{\varepsilon}$ all other wavefunctions $\psi'_{\varepsilon}$ are equally present on different parameter spaces, this measure has to be the uniform measure invariant to the allowed unitary transformations.
Any other measure would break the unitary symmetry of the state space, and would introduce preferred parameter spaces by a sleight of hand.

Let us find out the probability as the ratio between the measure of ``favorable cases'' and the measure of ``all possible cases''.
Denote by $\mu_{\approx}$, respectively $\mu_{\not\approx}$, the measure of the set of unit vectors $\ket{\psi'_{\varepsilon}}\in\hilbert_{\varepsilon}$ resulting from a structural symmetry transformation from $\psi$ that are consistent, respectively inconsistent with the observer's memory.
The probability that the memory of the observer-like structure contains accurate data about the external world is
\begin{equation}
\label{eq:probability-knowledge}
\mu_{\approx}/(\mu_{\approx}+\mu_{\not\approx}).
\end{equation}

As seen above, the measure $\mu$ is uniform. This implies that the probability from equation \eqref{eq:probability-knowledge} is the same as the probability that the observer-like structure guesses the values of the properties of the external world by pure chance. In other words, \emph{there is no correlation between the observer's state and the environment, so the observer knows nothing about the external world}.

Let us take a more concrete look.
The observer-like structure's memory encodes the knowledge that the property $A_{\varepsilon}$ has a definite value $a$.
The same physical property $A_{\varepsilon}$ relative to $\mc{C}'$ is represented by the observable $\oper{A}'_{\varepsilon}=\oper{S}\oper{A}_{\varepsilon}\oper{S}^\dagger$. Therefore, on $\mc{C}'$, the property of $A_{\varepsilon}$ is $a'$, not $a$, as the observer-like structure would think. The observable $\oper{A}_{\varepsilon}$ represents a different physical property on $\mc{C}'$, whose value is indeed still $a$, but as known from Proposition \ref{thm:alternative-positions} and Principle \ref{pp:meta-relativity-structure}, for observer-like structures on different parameter spaces, the same observable represents a different physical property. So the observer-like structure on $\mc{C}'$ associates $A_{\varepsilon}$ with the value $a$, but its parameter space $\mc{C}'$ associates it with the value $a'$.

Because the same observer-like structure $\psi_{\omega}(\q_{\omega})$ can have any possible environment $\psi_{\varepsilon}(\q'_{\varepsilon})=\psi'_{\varepsilon}(\q_{\varepsilon})$, with uniform probability, its memory contains zero information about the value of $\oper{A}_{\varepsilon}$.

However, perhaps we shouldn't consider as ``all possible cases'' all possible environments resulting from a structural symmetry transformation from $\psi$, but only those for which $\oper{A}_{\varepsilon}$ has a definite value.
Even restricted like this, the probability still has to be uniform, because together with any eigenvector $\ket{\psi_{\varepsilon}}\in\hilbert_{\varepsilon}$, any other eigenvector $\ket{\psi'_{\varepsilon}}\in\hilbert_{\varepsilon}$ is equally obtainable by a structural symmetry transformation.
We see that if all structures identical with the observer's structure were conscious regardless of their parameter space, the probability \eqref{eq:probability-knowledge} would be the same, so the observer would know about the property represented by $\oper{A}_{\varepsilon}$ exactly what is allowed by random guess.
\end{proofStep}

So indeed if observers were reducible to structures, they would know nothing about the external world.
\end{proof}

\begin{remark}
\label{rem:change-property}
One may think that on $\mc{C}'$ we should consider the observable $\oper{A}_{\varepsilon}$ instead of $\oper{A}'_{\varepsilon}=\oper{S}\oper{A}_{\varepsilon}\oper{S}^\dagger$. Then, $\oper{A}_{\varepsilon}\ket{\psi_{\varepsilon}}=a\ket{\psi_{\varepsilon}}$ remains true regardless of the basis, and it may seem that the knowledge encoded by the observer-like structure on $\mc{C}'$ is valid. But we should not forget that the starting point was that the observer-like structure determines the physical meaning of the observables. For example, if $\oper{A}_{\varepsilon}$ represents on $\mc{C}$ the coordinate $x$ of an external object, and its value is $a$, on $\mc{C}'$ the coordinate $x$ of the object appears to be $a'$, and it is represented by $\oper{A}'_{\varepsilon}$, not by $\oper{A}_{\varepsilon}$. Indeed, $\oper{A}'_{\varepsilon}\ket{\psi_{\varepsilon}}=a'\ket{\psi_{\varepsilon}}$, as Proposition \ref{thm:alternative-positions} says.
The observable $\oper{A}_{\varepsilon}$ will have the same value, but it will represent another property, and not the position of the object.
\qed
\end{remark}

\begin{remark}
\label{rem:thm:conscious-observer-classical}
In this article, by ``observer'' I don't necessarily mean somebody making a quantum measurement. Although any observer inevitably makes quantum observations, the result doesn't rely on the measurement problem.
The following Corollary should clear any doubt.
\qed
\end{remark}

\begin{corollary}
\label{thm:conscious-observer-classical}
Theorem~\ref{thm:conscious-observer} is true in a classical world too.
\end{corollary}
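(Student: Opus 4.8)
The plan is to reduce the classical case to the quantum framework already developed, invoking the Koopman--von Neumann (KvN) formulation of classical mechanics cited earlier. First I would recall that in the KvN approach a classical statistical state is represented by a complex wavefunction $\psi$ on phase space, with $\abs{\psi(\q,\p)}^2$ giving the classical probability density, and that the classical Liouville evolution is generated by a self-adjoint Koopman operator playing the role of $\oper{H}$ in the {\schrod} equation. Crucially, the classical observables $\q$ and $\p$ become commuting self-adjoint multiplication operators on this Hilbert space, so the entire machinery of Section~\sref{s:quantum-physics}---state vectors, observables, unitary evolution, a preferred parameter space $\mc{C}$---carries over verbatim. I would emphasize that the only structural difference is that the fundamental position and momentum operators commute in the classical case, but this does not affect any step of the argument.

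Next I would observe that the relevant symmetry group in the classical setting is exactly the group of canonical (symplectic) transformations of phase space, which lift to unitary transformations on the KvN Hilbert space and preserve the Liouville dynamics. These are the classical analogue of the isonomies of Definition~\ref{def:isonomy}: a canonical transformation commuting with the classical Hamiltonian flow induces a unitary $\oper{S}$ with $\oper{S}\oper{H}=\oper{H}\oper{S}$. Thus Remark~\ref{rem:identical-structures} and Principle~\ref{pp:meta-relativity-structure} apply unchanged, and observer-like structures in the phase-space wavefunction of one canonical parametrization have identical counterparts in any isonomic reparametrization.

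With these identifications in place, I would simply re-run the three steps of the proof of Theorem~\ref{thm:conscious-observer}. Step~\ref{step:thm:conscious-observer:anything} goes through because Remark~\ref{rem:identical-structures} holds on the KvN Hilbert space; Step~\ref{step:thm:conscious-observer:preserve} goes through because Lemma~1 of \cite{Stoica2023PrinceAndPauperQuantumParadoxHilbertSpaceFundamentalism} is a statement about abstract Hilbert-space operators and applies regardless of whether the underlying physics is classical or quantum; and Step~\ref{step:thm:conscious-observer:knows-nothing} goes through because the uniform-measure argument depends only on the unitary symmetry of the state space, which the canonical transformations supply. The conclusion is that a classical observer reducible to its phase-space structure would likewise have memory uncorrelated with its environment.

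The main obstacle I anticipate is conceptual rather than technical: I must argue convincingly that the KvN Hilbert-space formulation genuinely captures \emph{classical} observers and not some quantized surrogate, so that the result is not secretly importing quantum structure. The key point to stress is that the commutativity of $\wh{\q}$ and $\wh{\p}$ in KvN theory means there is no uncertainty principle and no genuine superposition of incompatible observables---the dynamics is exactly Liouville's equation---yet the reparametrization symmetry by canonical transformations is every bit as large and dissolves the memory--environment correlation in precisely the same way. I would close by noting that this is why the result is advertised as independent of the measurement problem: nothing in the argument used non-commutativity, collapse, or any feature peculiar to quantum mechanics.
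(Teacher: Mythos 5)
Your overall route is the same as the paper's (reduce to the Koopman--von Neumann formulation and re-run the three steps of Theorem~\ref{thm:conscious-observer}), but there is a genuine gap at the point you explicitly flag and then dismiss. The paper's version of the KvN reduction is more restrictive than yours, and the restriction is exactly where the content lies: in a classical world the physical states are \emph{only} the basis vectors $\ket{\q,\p}$ (wavefunctions localized at single phase-space points, not general complex amplitudes with $\abs{\psi}^2$ read as a statistical density), and the admissible symmetry transformations are \emph{only} those unitaries that map basis vectors to basis vectors, i.e.\ the lifts of canonical transformations. Your Step~\ref{step:thm:conscious-observer:preserve} argument---that Lemma~1 of \cite{Stoica2023PrinceAndPauperQuantumParadoxHilbertSpaceFundamentalism} ``is a statement about abstract Hilbert-space operators and applies regardless''---does not survive this restriction as stated: an arbitrary unitary $\oper{S}$ commuting with $\oper{H}$ generically maps a basis vector to a superposition of basis vectors, so the transformed ``parameter space'' $\oper{S}(\mc{C})$ would not be a classical phase-space parametrization at all, and the observer-like structure living on it would not be a classical observer-like structure. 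If $\oper{S}$ is not basis-preserving, your argument is precisely the ``quantized surrogate'' you worried about. What must be checked (and what the paper asserts when it says the restrictions ``don't affect the proof'') is that the structural symmetry transformation of Step~\ref{step:thm:conscious-observer:preserve} can be \emph{chosen within} the restricted class of basis-preserving unitaries commuting with $\oper{H}$.

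Relatedly, your answer to the conceptual worry misses the target. You defend the classicality of the reduction by pointing to the commutativity of $\wh{\q}$ and $\wh{\p}$ and the absence of an uncertainty relation, but commutativity of the observables is not the issue; the issue is the state space and the symmetry group. The KvN Hilbert space contains far more vectors than there are classical states, and its full unitary group is far larger than the group of (lifted) canonical transformations, so unless you confine both the states and the transformations as the paper does, you are proving the theorem for a fictitious ``wavefunctions-on-phase-space'' world rather than for a classical world. The fix is straightforward but must be said: take $\ket{\psi}=\ket{\psi_\omega}\ket{\psi_\varepsilon}$ and $\ket{\psi'}$ to be basis vectors, and argue that the required $\oper{S}$ exists among unitaries that permute the basis $\(\ket{\q,\p}\)$---equivalently, that there is a canonical transformation commuting with the Hamiltonian flow carrying the one environment configuration to the other while fixing the observer's. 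With that verification in place, Steps~\ref{step:thm:conscious-observer:anything} and~\ref{step:thm:conscious-observer:knows-nothing} do go through as you describe (with the uniform measure now taken over the restricted orbit), and your conclusion matches the paper's.
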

\begin{proof}
Koompan and von Neumann showed how to formulate Classical Physics using the quantum formalism \cite{Koopman1931HamiltonianSystemsAndTransformationInHilbertSpace,vonNeumann1932KoopmanMethod}.
But there are some differences.
The momentum operators $\wh{p}_j$ are not the form $-i\hbar\frac{\partial}{\partial q_j}$, they are independent of $\wh{q}_j$ and commute with them.
The resulting parameter space $\mc{C}$ is the classical \emph{phase space}, containing both $q_j$ and $p_k$ as generalized coordinates.
The physical states are represented only by basis vectors $\ket{\q,\p}$. The wavefunctions are localized at points $(\q,\p)\in\mc{C}$.
On the state vectors $\ket{\q,\p}$, all observables $\wh{f}(\wh{\q},\wh{\p})$ have definite values.
Because the state always remains classical, the evolution operators $\oper{U}_t$ from equation \eqref{eq:unitary_evolution} always map basis vectors to basis vectors.
The symmetry transformations of the phase space, called \emph{canonical transformations}, are represented only by unitary transformations that map basis vectors to basis vectors.

Since these restrictions don't affect the proof of Theorem~\ref{thm:conscious-observer}, the result applies to classical worlds too.
\end{proof}

\begin{corollary}
\label{thm:conscious-observer-maximal}
The parameter space supporting observers is unique up to spacetime and gauge symmetries.
\end{corollary}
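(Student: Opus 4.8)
The plan is to argue by contradiction, starting from the assumption that two parameter spaces $\mc{C}$ and $\mc{C}'$ that are \emph{not} related by a spacetime or gauge symmetry both support genuine observers. First I would invoke Theorem~\ref{thm:conscious-observer} in its contrapositive form: since observers demonstrably do know something about the external world, they are not reducible to their structure, so there is a privileged parameter space—or a privileged class of parameter spaces—whose observer-like structures are the genuine observers. The whole content of the corollary is then to pin down how large this privileged class is allowed to be, and the claim is that it is a single class modulo spacetime and gauge symmetries.

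Next I would apply Proposition~\ref{thm:alternative-positions} together with Principle~\ref{pp:meta-relativity-structure}. If $\mc{C}$ and $\mc{C}'$ are related by a structural symmetry transformation $\oper{S}$ that is neither a spacetime nor a gauge symmetry, then the observers on $\mc{C}$ and those on $\mc{C}'$ identify different physical spaces and assign different physical properties to the same observables. By hypothesis both families are genuine, hence both are veridically correlated with the one actual external world. But two families that each correctly track the same facts cannot disagree about the physical interpretation of the observables; this is the contradiction. I would make the step rigorous along the lines of Remark~\ref{rem:change-property}: whenever $\oper{S}$ sends a position observable to a genuinely different operator $\oper{A}'=\oper{S}\oper{A}\oper{S}^\dagger\neq\oper{A}$, the two observers report incompatible values for the same physical coordinate, so at most one of them can be veridical.

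The remaining work is to characterize exactly which transformations escape this contradiction, \ie which ones preserve the physical meaning of the observables rather than merely the form of the laws. Here I would use the observation in Section~\sref{s:symmetry-law} that a change of spatial or spacetime reference frame satisfies $\oper{S}(\mc{C})=\mc{C}$, merely reparametrizing the same physical space, and the remark that Poincar\'e transformations are structural symmetry transformations; gauge transformations likewise relate descriptions that physics already identifies as the same physical situation. These are precisely the transformations under which the space-identification procedure of Remark~\ref{rem:space} is left invariant. Any $\oper{S}$ outside this class produces, by Proposition~\ref{thm:alternative-positions}, a genuinely different physical space and hence a non-veridical observer, so the parameter spaces compatible with genuine observers form a single equivalence class under spacetime and gauge symmetries.

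I expect the main obstacle to lie in that third step: establishing that the transformations preserving the physical interpretation of \emph{all} observables are exactly the spacetime and gauge symmetries, with no others slipping in. The difficulty is that ``preserving physical meaning'' is defined through the observers' veridical correlation with the world rather than by a purely algebraic condition, so the argument must tie the invariance of the identification in Remark~\ref{rem:space} to the covariance properties of $\oper{S}$ without secretly presupposing the very privileging of $\mc{C}$ that Theorem~\ref{thm:conscious-observer} pins down only up to this symmetry class.
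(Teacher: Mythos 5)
Your proposal takes a genuinely different route from the paper's proof, and the obstacle you flag in your last paragraph is not a side issue: it is exactly where the argument fails as it stands. Two problems. First, your contradiction step presupposes a parametrization-independent fact about ``the same physical coordinate'' on which both observer families report; but the unique correspondence between observables and physical properties is precisely what the corollary (answering Question~\ref{question:correspondence-observables-meaning}) is supposed to deliver, so declaring that two veridical families ``cannot disagree about the physical interpretation of the observables'' quietly assumes the conclusion. Second, and more decisively, nothing in your third step shows that the transformations escaping the contradiction are \emph{exactly} the spacetime and gauge symmetries. In particular, you cannot exclude a structural symmetry $\oper{S}$ acting nontrivially only on properties no observer ever records: as the paper's own proof notes, the observer's memory contains information about a very limited number of properties, so veridicality of memory leaves $\psi_{\varepsilon}$, and hence $\mc{C}_{\varepsilon}$, massively underdetermined --- an ambiguity far larger than the spacetime-and-gauge class your conclusion allows.

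The paper closes this gap with an epistemic premise your proposal lacks: although memory fixes almost nothing, the observer can in principle perform experiments determining \emph{any} physical property of a system. Granting that, any property that cannot be known even in principle must be ``nonphysical'', \ie dependent on the reference frame or the gauge, and the totality of in-principle-knowable properties then pins down the parameter space uniquely up to precisely those symmetries. So where you attempt to characterize the harmless transformations quasi-algebraically (via invariance of the space identification of Remark~\ref{rem:space}) --- and correctly predict that this characterization cannot be completed without circularity --- the paper instead defines the residual symmetry class epistemically, as whatever remains inaccessible to all possible experiments, and identifies that remainder with frame and gauge dependence. To salvage your version you would need to add an explicit completeness-of-experimental-access assumption playing this role; without it, ``unique up to spacetime and gauge symmetries'' does not follow from veridicality alone.
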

\begin{proof}
The proof of Theorem~\ref{thm:conscious-observer} shows that only in some parameter spaces the wavefunction $\psi_{\varepsilon}$ of the external world corresponds to the memory of the observer represented by $\psi_{\omega}$.
Any unitary transformation $\oper{S}$ as in the proof of Theorem~\ref{thm:conscious-observer} introduces an ambiguity in some physical properties.
Then, as in the proof of the Theorem, the observer is unable to know these properties.
The observer's memory contains information about a very limited number of properties, so it doesn't fix the wavefunction $\psi_{\varepsilon}$ of the external world.
Consequently, the parameter space $\mc{C}_{\varepsilon}$ is far from being completely determined.
However, the observer can make experiments to determine any physical property of a system. This means that all observable properties should be accessible to the observer, given the right experiments.
Therefore, the properties that can't be known to the observer even in principle have to be ``nonphysical'', \ie dependent on the reference frame or the gauge.
The properties that can be known identify a unique parameter space supporting observers, up to spacetime and gauge symmetries.
\end{proof}

Theorem~\ref{thm:conscious-observer} answers Question~\ref{question:observers-structure} negatively, and, based on this, Corollary~\ref{thm:conscious-observer-maximal} answers Question~\ref{question:correspondence-observables-meaning} affirmatively.

%------------------------------------------------------------%
\section{Discussion}
\label{s:discussion}

We take for granted the existence of a unique correspondence between the operators representing properties, and the physical properties themselves.
But we have seen that, if all observer-like structures were observers, experiments couldn't ensure uniquely this correspondence.
This correspondence is absent from the theoretical description too.
It only seems to be part of the theory because we give different names to the various operators, we label them with different symbols, and we all follow the convention, so we agree with one another about their meaning.
The uniqueness of this correspondence is ensured by the fact that only observer-like structures from a particular parameter space can be observers, which is proven in Theorem~\ref{thm:conscious-observer}.
Without this, no observer would be able to know anything about the external world.

We also take for granted that our memory holds correct information about the external world automatically, just because we interacted with it in the past.
This predisposition may make the proof of Theorem~\ref{thm:conscious-observer} more difficult to understand.
But the evolution equations of physics are reversible, and if we remember our past interactions, we should equally remember our future interactions. Or rather there should be no relation between the content of our between memory and the external world at all, because all state vectors $\ket{\psi_{\omega}}\ket{\psi_{\varepsilon}}$ are equally ``legal'' under the laws of physics.
The states containing brains with memories that don't correspond to facts about the external world are as ``legal'' as those with reliable memories, and even overwhelmingly outnumber them.
Without special conditions that ensure the reliability of our memories, most observers would fluctuate ephemerally into existence by accident and then dissipate (\cite{Eddington1934NewPathwaysInScience_Boltzmann_brains}, p. 65).
The chance of not being a \emph{Boltzmann brain} would be practically zero. 

Fortunately, the initial state of the universe was extremely special. Penrose estimated how special it was: one in $10^{10^{123}}$ \cite{Penrose1989EmperorsNewMind}.
The Boltzmann entropy was extremely low, and it increases steadily in time, ensuring the validity of the Second Law of Thermodynamics \cite{Boltzmann1964LecturesOnGasTheory,Feynman1965TheCharacterOfPhysicalLaw}.
This is believed to also ensure the relative reliability of our memories.
And even though, as shown in \cite{Stoica2022DoesQuantumMechanicsRequireConspiracy}, reliable memories require more fine tuning of the initial state than just starting in a low-entropy state, we live in such a friendly universe.
Our memories about the properties of external systems are reliable because we are part of a universe in such a select state.

But this is not true of most states in which the universe could be, including most states resulting from its actual state by unitary transformations $\oper{S}$.
No matter how friendly our universe appears on our parameter space, it will not appear friendly at all to the observer-like structures from most alternative parameter spaces.
Therefore, if observers were reducible to structures, any observer should expect that in the very next moment the universe containing it will turn out to be crazy. There would be rare instances when the observer-like structure survives for a brief period of time, and even then, in most cases, it would experience a surrealist reality.
Every time when this doesn't happen to us is a subtle reminder that we are more than the structure.

%------------------------------------------------------------%
\section{Implications}
\label{s:implications}

It is sometimes believed that the state vector and the Hamiltonian are sufficient to recover everything else, the space structure, the tensor product structure (necessary for the existence of subsystems), and the correspondence between observables and physical properties. This seems to be needed by various quantum-first approaches to Quantum Gravity, see \cite{Carroll2021RealityAsAVectorInHilbertSpace} and other references in \cite{Stoica2021SpaceThePreferredBasisCannotUniquelyEmergeFromTheQuantumStructure}. 
In \cite{Stoica2021SpaceThePreferredBasisCannotUniquelyEmergeFromTheQuantumStructure,Stoica2022VersatilityOfTranslations,Stoica2024DoesTheHamiltonianDetermineTheTPSAndThe3dSpace} it was shown that this is impossible, and there are infinitely many physically distinct but isomorphic structures.
Any quantum-first approach is extremely ambiguous, resulting in infinitely many solutions representing physically different worlds.
But we could hope that this ambiguity is harmless and consistent with the empirical data.
However, Theorem~\ref{thm:conscious-observer} shows that this isn't true. If it were true, the observer-like structures would not be able to know anything about the external world. This would contradict the most basic empirical facts and would make science impossible.

Theorem~\ref{thm:conscious-observer} also rejects the thesis that physical properties are purely relational, as proposed by Rovelli \cite{Rovelli1996RelationalQM}, since this also implies that observers would know nothing about the external world.

Proposals like the above may come from the implicit assumption that \emph{ontic structural realism}, the thesis that only the structure exists, that there are only relations and no relata, is true \cite{sep-structural-realism}. This is also endorsed by materialism or physicalism, positions that don't admit ontology having phenomenal (that is, sentiential or experiential) powers. Theorem~\ref{thm:conscious-observer} shows otherwise: not all isomorphic structures are created equal.
Epistemic structural realism proposes that even if things have a nature of their own, this is inaccessible to us through science (see Section~\sref{s:meta-relativity-observers}).
But Theorem~\ref{thm:conscious-observer} shows that, without grounding our knowledge into something in addition to the structures, we would know nothing.
Something makes only some of the isomorphic observer-like structures be observers.
So whatever breaks structural realism, this is manifest through the observers, to the observers.

But what is an observer?
In this article I had in mind the human observers as a directly verifiable example familiar to all of us. An observer-like structure is any structure isomorphic with the structure of an observer. But to restrict observers to humans would be anthropomorphism, so if there is a way to characterize observers in a non-anthropomorphic way, we should adopt it. But whatever the definition of an observer is, such an observer must have a structure. And regardless of the characteristic of the structure of the observers, Theorem~\ref{thm:conscious-observer} shows that not all structures isomorphic with it are observers.

This article makes no claim to define or elucidate what kind of structure a system must have to be a conscious observer. This problem belongs to other fields, from Neuroscience to philosophy of mind.
But the results from this article inform these fields that observers, whatever they are, are not reducible to their structure.
For example, the \emph{computational theory of mind} proposes that the mind is reducible to a computation \cite{sep-computational-mind,ColomboPiccinini2023TheComputationalTheoryOfMind}.
If ``computation'' means what is understood in Computer Science, this is already rejected \cite{Stoica2023DoesAComputerThinkIfNoOneIsAroundToSeeIt}, and if it also means the internal structure of the machine implementing it, this would contradict Turing universality \cite{Stoica2023DoesAComputerThinkIfNoOneIsAroundToSeeIt} and Theorem~\ref{thm:conscious-observer}.
\emph{Functionalism} \cite{sep-functionalism} proposes that the mind reduces to the way it functions. If by ``function'' we understand structure and its dynamics, this is in conflict with Theorem~\ref{thm:conscious-observer}.
\emph{Illusionism} proposes that \emph{phenomenal consciousness}, experience itself, is an illusion of the computation or function of the structure \cite{Dennett2016Illusionism,Frankish2016IllusionismAsATheoryOfConsciousness}. 
Even if we go down to the brain's finest structural details, as in the \emph{identity theory} \cite{sep-mind-identity}, we can't avoid Theorem ~\ref{thm:conscious-observer}.
The structure of a Carbon atom or even of any particle can exist in any possible parameter space, hence the problem remains.
In this sense, Theorem~\ref{thm:conscious-observer} shows that the observer-like structures from other parameter spaces are \emph{philosophical zombies} \cite{KirkSquires1974ZombiesVMaterialists,sep-zombies}, disproving thus the materialist thesis that the ontological substrate can't have phenomenal effects.
On the other end of the spectrum we find the proposals that the mind is not reducible to structure.
\emph{Panpsychism} proposes that even the elementary particles have such mental properties \cite{sep-panpsychism}. A naive rejection of panpsychism is that it adds new properties unknown in physics, and this should lead to different predictions than, for example, Particle Physics. But this article shows that such properties correspond in fact to the already known physical properties.
Corollary~\ref{thm:conscious-observer-maximal} shows that this correspondence has to be unique (up to Poincar\'e and gauge symmetries), and it should go down to the complete set of basis observables. This implies a full identification between mental and physical properties, suggesting a form of monism. \emph{Neutral monism} proposes that the intrinsic nature of things appears externally as physical properties, and internally as mental properties \cite{Russell1927AnalysisOfMatter}. \emph{Idealism} \cite{sep-idealism,Kastrup2019AnalyticIdealismAConsciousnessOnlyOntology,Stoica2020NegativeWayToSentience,Indich1995ConsciousnessInAdvaitaVedanta} is a monistic position that identifies the physical properties and the physical laws as the structure and dynamics of a fundamental consciousness.
Another position is \emph{dualism} \cite{sep-dualism}, stating that both matter and mind are fundamental and either interact or mirror one another. This would unnecessarily duplicate both the ontology and the structures, so it would be redundant. A monistic position wouldn't have this problem.
Whatever the explanation is, it should take into account that observers are not reducible to their structure. Corollary~\ref{thm:conscious-observer-maximal} shows all physical properties are grounded in sentience, in a way that goes beyond the experience of individual observers, making them agree with one another and with the external world about the physical meaning of the observables.
There is an essentially unique parameter space supporting observers, and observer-like structures from other parameter spaces are philosophical zombies.
This difference seems to elude both the theoretical description and the intersubjectively verifiable experiments.
However, as shown by Theorem~\ref{thm:conscious-observer} and Section~\sref{s:discussion}, this is revealed empirically by the fact that at any instant we could turn out to be observers in a crazy surrealistic world completely unrelated to our memories, but every time we find ourselves in a friendly one.
As if the universe is so friendly that it reassures us at every instant about this fact.
Indubitably, structure remains important, and trying to characterize the structure of conscious systems is essential in advancing our understanding of observers.

The existence of an ontologically special basis beyond structure and relations was conjectured previously because it allows reasoning about the self-location of the observer in a way that leads to the Born rule \cite{Stoica2022BornRuleQuantumProbabilityAsClassicalProbability} and endows the Many-Worlds Interpretation with genuine probabilities and a local ontology \cite{Stoica2023TheRelationWavefunction3DSpaceMWILocalBeablesProbabilities}. Other proposals that don't use a fixed ontic basis fail to get the Born rule (\cite{Stoica2023TheRelationWavefunction3DSpaceMWILocalBeablesProbabilities}, Proposition 1, \S 6).
Theorem~\ref{thm:conscious-observer} justifies conjecturing such ontic differences, by showing that this is unavoidable.

%------------------------------------------------------------%

\end{document}